\newcommand{\covid}{CO\-VID-19\xspace}
\newcommand{\rnumber}{r\xspace}
\newcommand{\poisson}{\text{Pois}\xspace}
\newcommand{\nbin}{\text{NBin}\xspace}
\newcommand*{\given}{\ensuremath{\,|\,}}
\newcommand\indicator[1]{{\mathbbm{1}[#1]}}
\title{Group Testing under Superspreading Dynamics}
\author[$\mathsection$]{Stratis Tsirtsis}
\author[$*$]{Abir De}
\author[$\dagger$]{Lars Lorch}
\author[$\mathsection$]{Manuel Gomez Rodriguez}
\affil[$\mathsection$]{Max Planck Institute for Software Systems,  
\href{mailto:manuelgr@mpi-sws.org}{\{stsirtsis, manuelgr\}@mpi-sws.org}}
\affil[$*$]{IIT Bombay\\ \href{mailto:abir@cse.iitb.ac.in}{abir@cse.iitb.ac.in}}
\affil[$\dagger$]{ETH Zurich\\
\href{mailto:llorch@student.ethz.ch}{llorch@student.ethz.ch}
}
\date{}
\begin{document}

\maketitle

\begin{abstract}
%
Testing is recommended for all close contacts of confirmed COVID-19 patients. 
%
However, existing group testing methods are oblivious to the circumstances of contagion 
provided by contact tracing.
Here, we build upon a well-known semi-adaptive pool testing method, Dorfman'{}s
method with imperfect tests, and derive a simple group testing method based on 
dynamic programming that is specifically designed to use the information 
provided by contact tracing.
Experiments using a variety of reproduction numbers and dispersion levels, including
those estimated in the context of the COVID-19 pandemic, show that the pools found
using our method result in a significantly lower number of tests than those found using standard Dorfman'{}s method, especially when the number of contacts of an infected individual is small.
Moreover, our results show that our method can be more beneficial when the secondary infections are highly overdispersed.
%
\end{abstract}

\section*{Introduction}
\label{sec:introduction}
As countries around the world learn to live with \covid, the use of testing, contact tracing and 
isolation (TTI) has been proven to be as important as social distancing for containing the spread
of the disease~\cite{contreras2020challenges}.
However, as the infection levels grow, TTI reaches a tipping point and its effectiveness 
quickly degrades as the health authorities lack resources to trace and test all contacts 
of diagnosed individuals~\cite{linden2020foreshadow}.
In this context, there has been a flurry of interest on the use of group testing---testing 
groups of samples simultaneously---to scale up testing under limited resources. 

The literature on group testing methods has a rich history, starting with the seminal work by Dorfman \cite{dorfman1943detection, graff1972group, aprahamian2019optimal}.
However, existing methods~\cite{aldridge2019group,du2000combinatorial}, including those developed and used in the context of the \mbox{\covid} pandemic~\cite{agoti2020pooled,eberhardt2020multi,sunjaya2020pooled,mutesa2020pooled,deckert2020simulation,noriega2020increasing, ben2020large, hanel2020boosting,zhu2020noisy}, are oblivious to the circumstances of contagion provided by contact tracing and assume statistical independence of the samples to be tested. 
This assumption can be seemingly justified by classical epidemiological models where the number 
of infections caused by a single individual follows a Poisson distribution. 
However, in \covid, there is growing evidence suggesting that the number of secondary
infections caused by a single individual is \emph{overdispersed}---most individuals do not infect anyone but a few \emph{superspreaders} infect many in infection hotspots~\cite{adam2020clustering,carehomes,endo2020estimating,lau2020characterizing,frieden20identifying,athreya2020effective, lorch2020spatiotemporal}.
Overdispersion 
has been also observed in MERS and SARS~\cite{oh2015middle,lloyd2005superspreading,stein2011super}.
In this work, our goal is to develop group testing methods that  
are specifically designed to use the information provided by contact tracing and are effective in the presence of overdispersion.

More specifically, we build upon a well-known semi-adaptive pool testing method, Dorfman'{}s method 
with imperfect tests~\cite{hwang1975generalized, aprahamian2019optimal}. 
In Dorfman'{}s method, samples from multiple individuals are first pooled together and 
evaluated using a single test. 
If a pooled sample is negative, all individuals in the pooled sample are deemed negative. If the pooled sample is positive, each individual sample from the pool is then tested independently.
However, rather than modeling the probability that each individual sample is positive using an independent 
Bernoulli distribution as in Dorfman'{}s method, 
we assume that: 
%
(i) the samples to be tested are all the contacts of a diagnosed individual during their infectious
period, who are identified using contact tracing, and
%
(ii) the number of true positive samples, \ie, secondary infections by the diagnosed individual, 
follows an overdispersed negative binomial distribution~\cite{athreya2020effective, lorch2020spatiotemporal}.
%
Given any arbitrary set of pools, we then compute the average number of tests and the expected number of 
false negatives and false positives under our model. 
Finally, we introduce a dynamic programming algorithm to efficiently find the set of 
pools that optimally trade off the average number of tests, false negatives and false 
positives in polynomial time.
Experiments using a variety of reproduction numbers and dispersion levels in secondary infections, including
those observed for \covid, show that the pools found
using our method result in a significantly lower average number of tests than those found using standard Dorfman'{}s method, especially when the number of contacts of an infected individual is small.
Moreover, our results show that our method can be particularly beneficial when the number of secondary infections caused by an infectious individual is highly overdispersed.
%


\section*{Methods}
\label{sec:model}
\subsection*{Modeling overdispersion of infected contacts}
Previous work have mostly built on the assumption that the number of infections $X$ caused by a single individual follows a Poisson distribution with mean $\rnumber$, so $X \sim \poisson(\rnumber)$, where
$\rnumber$ is often called the effective reproduction number.
However, having equal mean and variance, the Poisson is unable to capture settings where the number of cases to be tested for
exhibits higher variance.
Following recent work in the context of \covid~\cite{endo2020estimating,athreya2020effective}, 
we instead model $X$ using a generalized negative binomial distribution.
For a (standard) negative binomial distribution, $X \sim \nbin(k,p)$ can be interpreted as the 
number of successes before the $k$-th failure in a sequence of Bernoulli trials with success probability $p$. 
For a generalized negative binomial distribution, $k > 0$ can take real values and the probability mass
function is given by
\begin{align*} 
    P(X = n) 
    &= \frac{\Gamma(n + k)}{\Gamma(k) n!} p^n (1-p)^k,
\end{align*}
where $k$ is called the dispersion parameter and parameterizes higher variance of the distribution 
for small $k$. 
%
%
Since $\EE[X] = kp/(1-p)$, we assume in this work that the number of secondary infections $X$ is distributed as  $X \sim \nbin\left ( k, p \right)$ with $p = \rnumber / (k + \rnumber)$,
hence parameterizing $X$ via its mean $\EE[X] = r$ and dispersion parameter $k$. 
Under this parameterization, $\text{Var}[X] = r(1 + r/k)$, which is greater than the variance of the Poisson $r$ for $k < \infty$.
For $k \rightarrow \infty$, the sequence of random variables $X_k \sim \nbin(k, r/(k + r))$ converges in distribution to $X \sim \poisson(r)$, making the negative binomial a suitable generalization of the Poisson for modeling secondary infections.

Assuming we test all contacts of a diagnosed individual during their infectious period, we have prior information about the maximum number of possible infections $N$ in practice.
In this setting, we will write
\begin{align}
    q_{\rnumber, k, N}(n) := P(X = n \given X \leq N) \quad\text{when} \quad 
    X \sim \nbin
    \left ( k, \rnumber / (k + \rnumber) \right ). \label{eq:q}
\end{align}
Here, note that $P(X = n \given X \leq N) = P(X = n) / P(X \leq N)$ if $n \leq N$ and 0 otherwise.

\subsection*{Pooling contacts of a positively diagnosed individual}
In this context, our goal is to identify infected individuals among all contacts $\Ncal$ of a positively diagnosed individual via testing, where $|\Ncal| = N$.
For events concerning each individual $j \in \Ncal$, we define the following indicator random variable:
\begin{align*} 
    I_j &= \indicator{\text{individual $j$ is infected}}
\end{align*}
In addition, for each pool of individuals $\Scal \subseteq \Ncal$, 
we define the number of infected in $\Scal$ as $I(\Scal) := \sum_{j \in \Scal} I_j$.
%
%
%
Following our assumption on the distribution of the number of secondary infections, we define
\begin{align*}
    P(I(\Ncal) = n) = q_{\rnumber, k, N}(n)
\end{align*}
Let $T(\Scal) = \indicator{\text{test of pool $\Scal$ is positive}}$.
%
To account for the sensitivity $s_e$ (\ie, true positive probability) and
specificity $s_p$ (\ie, true negative probability) of tests, 
we parameterize the conditional probabilities as
\begin{align*} 
    P(T(\Scal) = 1 \given I(\Scal)>0) &= s_e\\
    P(T(\Scal) = 0 \given I(\Scal) = 0) &= s_p
\end{align*}
%
In the above, following the literature on the group testing \cite{aprahamian2019optimal}, we assume that the sensitivity of pool tests is independent of the exact number of infected individuals in the pool.
Moreover, while dilution has been shown to often be negligible~\cite{yelin2020evaluation}, 
the effect can easily be addressed by making the conditional $T(\Scal) \given I(\Scal)$, 
dependent on the corresponding pool size $|\Scal|$.

\subsection*{Dorfman testing under overdispersion of infected contacts}
Dorfman testing proceeds by pooling individuals into non-overlapping partitions of $\Ncal$ 
and first testing the combined samples of each pool using a single test.
Every member of a pool is marked as negative if their combined sample is negative. 
If a combined sample of a pool is positive, each individual of the pool is subsequently tested individually to determine who exactly is marked positive in the pool.

Let $D^{\Scal}_j$ denote the indicator random variable for the event that individual $j$ is marked as infected in pool $\Scal \subseteq \Ncal : |\Scal| > 1$ after Dorfman testing.
Then, $D^{\Scal}_j$ can be expressed as
\begin{align*} 
    D^{\Scal}_j &= \indicator{T(\Scal) = 1 \cap T(\{j\}) = 1}, 
\end{align*}
%
\ie, taking value 1 if and only if the combined sample of pool $\Scal$ 
is first tested positive \emph{and} the sample of individual $j$ is tested positive in the second step.
%
%
In the simple case of $|\Scal|=1$, we have $D^{\Scal}_j = T(\{j\})$.
%

\paragraph{Expected number of tests}
Let $K(\Scal)$ be the number of tests performed when testing pool $\Scal$ as described above.
Then, the expected number of tests  $\EE[K(\Scal)]$ due to a pool
$\Scal$ is:
\begin{align*} 
\EE[K(\Scal)] &= 
\begin{cases}
1 + f(\Scal) & |\Scal| > 1 \\
1 & |\Scal| = 1
\end{cases}
\end{align*}
where $f(\Scal)$ is given by
\begin{align*}
    f(\Scal) 
    &= |S| \Big [ 1 - P\big(T(\Scal) = 0\big)\Big]\\
    &= |S| \Bigg [ 1 - 
        \sum_{s=0}^{|\Scal|} 
        P\big(T(\Scal) = 0 \given I(\Scal) = s\big) 
        P\big(I(\Scal) = s\big)
    \Bigg]\\
    &= |S| \Bigg [ 1 - 
        \sum_{s=1}^{|\Scal|}
        P\big(T(\Scal) = 0 \given I(\Scal) = s\big) 
        \sum_{n=s}^N 
            P\big(I(\Scal) = s \given I(\Ncal) = n\big)  
            P\big(I(\Ncal) = n\big)\\
    &\quad\quad\quad -  P\big(T(\Scal) = 0 \given I(\Scal) = 0\big) 
        \sum_{n=0}^N 
            P\big(I(\Scal) = 0 \given I(\Ncal) = n\big)  
            P\big(I(\Ncal) = n\big)
    \Bigg]\\
    &= |S| \Bigg [ 1 - 
        \sum_{s=1}^{|\Scal|}
        (1 - s_e) 
        \sum_{n=s}^N 
            \frac
                {\binom{n}{s}\binom{N-n}{|\Scal| - s}}
                {\binom{N}{|\Scal|}}
            q_{\rnumber, k, N}(n)
        -  s_p 
        \sum_{n=0}^N 
            \frac
                {\binom{N-n}{|\Scal|}}
                {\binom{N}{|\Scal|}}
            q_{\rnumber, k, N}(n)
    \Bigg]
\end{align*}
where the last step follows from the fact that 
$I(\Scal) \given I(\Ncal) = n  \sim \text{HGeom}(N,n,|\Scal|)$, our assumption about $\PP (I(\Ncal) = n)$ and our assumptions about $T(\Scal)$.

\paragraph{Expected number of false negatives}
To compute the number of false negatives, we distinguish between two cases.
If $|\Scal|=1$, \ie, the pool consists of only one person, there is no distinction between a group test and an individual test. Therefore, a false negative can occur only if the person is infected and the test turns out negative.
Thus, 
\begin{equation*}
    \EE[FN(\Scal)] = (1-s_e)P(I(\Scal)=1) =
    (1-s_e)\sum_{n=1}^N 
    \frac{n}{N}q_{\rnumber, k, N}(n)
\end{equation*}
If $|\Scal|>1$, a pooled test is performed and, if it turns out positive, individual tests are performed afterwards. The expected number of false negatives in this case is
\begin{equation*}
    \EE[FN(\Scal)] = \sum_{s = 1}^{|\Scal|} s \, P(T(\Scal)=0 \given I(\Scal) = s) P(I(\Scal) = s) + \sum_{s = 1}^{|\Scal|} P(T(\Scal)=1 \given I(\Scal) = s) P(I(\Scal) = s) \, s \, (1-s_e)
\end{equation*}
where the first term corresponds to the case where the group test outcome is falsely negative and the second term corresponds to the case where the group test outcome is truly positive and the individual tests are falsely negative.
Using our assumptions about the individual probabilities, we can rewrite the above expression as
\begin{align*}
    \EE[FN(\Scal)] = \sum_{s = 1}^{|\Scal|} s (1-s_e) P(I(\Scal) = s) + \sum_{s = 1}^{|\Scal|}  s (1-s_e) s_e P(I(\Scal) = s)
    = \sum_{s = 1}^{|\Scal|} s (1-s_e^2) \left[ \sum_{n=s}^N 
            \frac
                {\binom{n}{s}\binom{N-n}{|\Scal| - s}}
                {\binom{N}{|\Scal|}}
            q_{\rnumber, k, N}(n) \right]
\end{align*}

\paragraph{Expected number of false positives}  We likewise distinguish between the two cases for computing the number of false positives.
Again, if $|\Scal|=1$, there is no distinction between a group test and an individual test. Therefore, a false positive can occur only if the person is not infected and the test turns out positive.
Thus,
\begin{equation*}
    \EE[FP(\Scal)] = (1-s_p)P(I(\Scal)=0) =
    (1-s_p)\sum_{n=0}^{N-1} 
    \frac{N-n}{N}q_{\rnumber, k, N}(n)
\end{equation*}
If $|\Scal|>1$, a group test is performed and, after a positive result, individual tests are performed subsequently. 
Truly negative subjects are falsely classified as 
positive if the corresponding group test outcome is positive and the 
subject'{}s subsequent individual test outcome is positive, \ie,
\begin{equation*}
\EE[FP(\Scal)] = \sum_{s = 0}^{|S|-1} P(T(\Scal) = 1 \given I(\Scal) = s) P(I(\Scal) = s) (|\Scal|-s) (1-s_p) 
\end{equation*}
Finally, under our assumptions about the individual probabilities, we rewrite the above expression as
\begin{align*}
    \EE[FP(\Scal)] &=  (1-s_p) P(I(\Scal) = 0) |\Scal| (1-s_p) + \sum_{s = 1}^{|S|-1} s_e P(I(\Scal) = s) (|\Scal|-s) (1-s_p)\\
	&= (1-s_p)^2 |\Scal|
	\left[ \sum_{n=0}^N 
            \frac
                {\binom{N-n}{|\Scal|}}
                {\binom{N}{|\Scal|}}
            q_{\rnumber, k, N}(n) \right]
     +
    \sum_{s=1}^{|\Scal|-1} s_e (|\Scal|-s)(1-s_p) 
    \left[ \sum_{n=s}^N 
            \frac
                {\binom{n}{s}\binom{N-n}{|\Scal| - s}}
                {\binom{N}{|\Scal|}}
            q_{\rnumber, k, N}(n) \right]
\end{align*}

\subsection*{Finding the optimal pool sizes}
\label{sec:algorithm}

First, we note that the expected number of tests, false negatives and false
positives only depend on the pool size. Therefore, overloading notation, for
a number of contacts $|\Ncal| = N$ and pool of size $|\Scal| = s$, we will write $E[K(s)]$, $\EE[FN(s)]$ and $\EE[FP(s)]$.


Our goal is to find the sizes $\{s_i\}$ of the optimal sets of pools that optimally trade off the expected number of tests, false negatives and false positives~\cite{aprahamian2019optimal}:
\begin{equation*}
    \underset{\{s_i\}}{\text{minimize}} \quad \sum_i g(s_i) \quad \text{subject to} \quad \sum_i s_i = N
\end{equation*}
with
\begin{equation*}
    g(s_i) =  \EE[K(s_i)] + \lambda_1 \EE[FN(s_i)] + \lambda_2 \EE[FP(s_i)],
\end{equation*}
where $\lambda_1$ and $\lambda_2$ are given non-negative parameters, penalizing the numbers of false negatives and false positives.
\begin{algorithm}[t]
  \renewcommand{\algorithmicrequire}{\textbf{Input:}}
  \caption{Find the sizes of the optimal set of pools under overdispersion of infected contacts} \label{alg:dp}
    \begin{algorithmic}[1]
    \Require Number of secondary contacts $N$, sensitivity $s_e$, specificity $s_p$, parameters $r$, $k$, $\lambda_1$ and $\lambda_2$
    \State $\Scal_{0} \gets \emptyset$
    \State $h(0) \gets 0$
    \For{$k \in \{1, \ldots, N\}$}
        \State $g(k) \gets \textsc{ComputeObjective}(k, N, s_e, s_p, r, k, \lambda_1, \lambda_2)$ 
    \EndFor 
    \For{$n \in \{1, \ldots, N\}$}
        \State $h(n) \gets \min_{1 \leq j \leq n} \left[g(j) + h(n-j)\right]$
        \State $s \gets \argmin_{1 \leq j \leq n} \left[g(j) + h(n-j)\right]$
        \State $\Scal_n = \Scal_{n-s} \cup \{s\}$
    \EndFor 
    \State \Return $\Scal_N$
    \end{algorithmic}
\end{algorithm}

Perhaps surprisingly, we can solve the above problem in polynomial time using a simple dynamic programming procedure. 
More specifically, define the following recursive functions:
%
\begin{align*}
    h(n) &= \min_{1 \leq j \leq n} [g(j) + h(n-j)] \\
    \Scal_n &= \Scal_{n-s} \cup \{s\},
\end{align*}
where $s = \argmin_{1 \leq j \leq n} [g(j) + h(n-j)]$.
%
%
Interpreting $n$ as the number of individuals not yet assigned to a pool, using the two recursive functions, the (sizes of the) optimal set of pools can be recovered by computing $h(n)$ in increasing order of $n$.
Algorithm~\ref{alg:dp} summarizes the overall procedure, where the function $\textsc{ComputeObjective}(\cdot)$ precomputes the function 
$g(k)$ for each $k \in \{1, \ldots, N\}$.
%
%
More formally, we arrive at the following proposition:
\begin{proposition}
Given $N$ contacts, the set of pool sizes $\Scal_N$ returned by Algorithm~\ref{alg:dp} 
are optimal.
\end{proposition}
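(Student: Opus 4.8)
The plan is to prove the claim by strong induction on the number of individuals $n$, showing that the dynamic-programming value $h(n)$ equals the optimal objective $\mathrm{OPT}(n) := \min \{ \sum_i g(s_i) : \sum_i s_i = n,\ s_i \in \{1,\dots,n\}\}$ of the partition problem restricted to $n$ contacts, and simultaneously that the partition $\Scal_n$ produced by the algorithm is feasible with total cost $\sum_{s \in \Scal_n} g(s) = h(n)$. Optimality of $\Scal_N$ then follows immediately by taking $n = N$.

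For the base case $n = 0$, the empty partition $\Scal_0 = \emptyset$ is feasible, has cost $0 = h(0)$, and is trivially optimal. For the inductive step I would assume the statement holds for every $m < n$ and argue the two inequalities separately. The bound $\mathrm{OPT}(n) \le h(n)$ is the easy direction: the algorithm selects $s = \argmin_{1 \le j \le n}[g(j) + h(n-j)]$ and sets $\Scal_n = \Scal_{n-s} \cup \{s\}$; by the induction hypothesis $\Scal_{n-s}$ is a feasible partition of $n-s$ with cost $h(n-s)$, so $\Scal_n$ is a feasible partition of $n$ with cost $g(s) + h(n-s) = h(n)$, whence $\mathrm{OPT}(n) \le h(n)$.

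The nontrivial direction $h(n) \le \mathrm{OPT}(n)$ rests on the optimal-substructure property. I would take any optimal partition $P^\star$ of $n$ and single out one of its parts $j^\star$. Deleting it yields a feasible partition of $n - j^\star$ whose cost is $\mathrm{OPT}(n) - g(j^\star)$, and this cost cannot beat the optimum for $n - j^\star$; hence $\mathrm{OPT}(n) - g(j^\star) \ge \mathrm{OPT}(n - j^\star) = h(n - j^\star)$, where the last equality is the induction hypothesis. Rearranging and minimizing over the choice of part gives $\mathrm{OPT}(n) \ge g(j^\star) + h(n - j^\star) \ge \min_{1 \le j \le n}[g(j) + h(n-j)] = h(n)$. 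Combining the two inequalities closes the induction.

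The main thing to get right is the optimal-substructure claim itself: that stripping a single pool from a globally optimal solution must leave a solution that is optimal for the residual problem, which is what licenses the recursion over $h(n-j)$. Because the objective $\sum_i g(s_i)$ is completely separable across pools and the only coupling between pools is the additive size constraint $\sum_i s_i = N$, removing one part of size $j$ decreases the objective by exactly $g(j)$ and the budget by exactly $j$ independently of the other parts, so no exchange argument beyond this deletion is needed. I would also note the minor bookkeeping point that $\Scal_n$ should be read as a multiset, since equal pool sizes may recur, and that feasibility $\sum_{s\in\Scal_n} s = n$ propagates through the recursion by the same induction; the polynomial running time is then immediate, since $h$ is filled in for $n = 1,\dots,N$ with an $O(n)$ minimization each.
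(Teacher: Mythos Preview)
Your proof is correct and follows essentially the same strong-induction-plus-optimal-substructure argument as the paper: both peel off one part of an optimal partition and invoke the inductive hypothesis on the residual problem. Your version is in fact a bit tidier---you make the feasibility direction $\mathrm{OPT}(n)\le h(n)$ explicit (the paper leaves it implicit), start the induction at $n=0$ rather than $n=1$, and flag the multiset reading of $\Scal_n$---but the underlying idea is identical.
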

%

\begin{proof}
We will prove this proposition by induction.
In the base case, where $n=1$, it is easy to see that the optimal solution is $\Scal^*_1 = \{1\}$ \ie, it consists of one group with size $1$ and the minimum of the objective value is $OPT_1 = g(1)$, while the recursive functions trivially find the optimal solution since $h(1) = g(1)$.
For $n>1$ contacts, the inductive hypothesis is that the values $h(i)$ and sets $\Scal_i$ recovered by Algorithm~\ref{alg:dp} for all $i<n$ are optimal.
Let $\Scal^*_n$ and $OPT_n$ be the optimal set of pool sizes and the respective value of the objective function for $n$ contacts.

Suppose, for the sake of contradiction, that $OPT_n<h(n)$ \ie, the solution computed using the recursive functions for $n$ contacts is {\em suboptimal}.
Let $\Scal^*_n = \{s^*_1, s^*_2, \ldots, s^*_l\}$. Then, we get:
\begin{equation*}
\sum_{i=1}^l g(s^*_i) < g(s) + h(n-s) \Rightarrow 
\sum_{i=1}^l g(s^*_i) < g(s^*_1) + h(n-s^*_1) \Rightarrow  
\sum_{i=2}^l g(s^*_i) < OPT_{n-s^*_1},
\end{equation*}
where the first step is based on the fact that $g(s) + h(n-s) \leq g(j) + h(n-j)$ for all $j: 1\leq j\leq n$ and the second step is based on the inductive hypothesis.
Since $\sum_{i=2}^l s^*_i = n-s^*_1$, the final inequality implies that, having $n-s^*_1$ contacts, the set of pool sizes $\{s^*_2, \ldots, s^*_l\}$ is strictly better than the optimal one which is clearly a contradiction.
Therefore, the values $h(n)$ and sets of pool sizes $\Scal_n$ given by the recursive functions are optimal for all $n: 1\leq n \leq N$.

\end{proof}

\section*{Results and Discussion}
\label{sec:results}
%
%
%

\begin{figure}[t]
\centering
\subfloat[Number of tests per contact]{\includegraphics[width=0.32\textwidth]{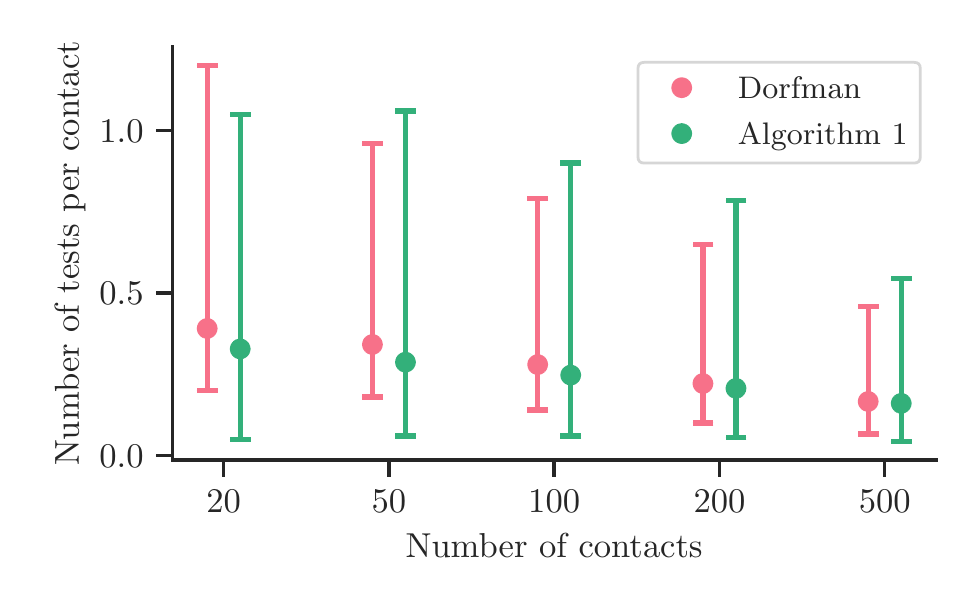}} 
\subfloat[Pool size]{\includegraphics[width=0.32\textwidth]{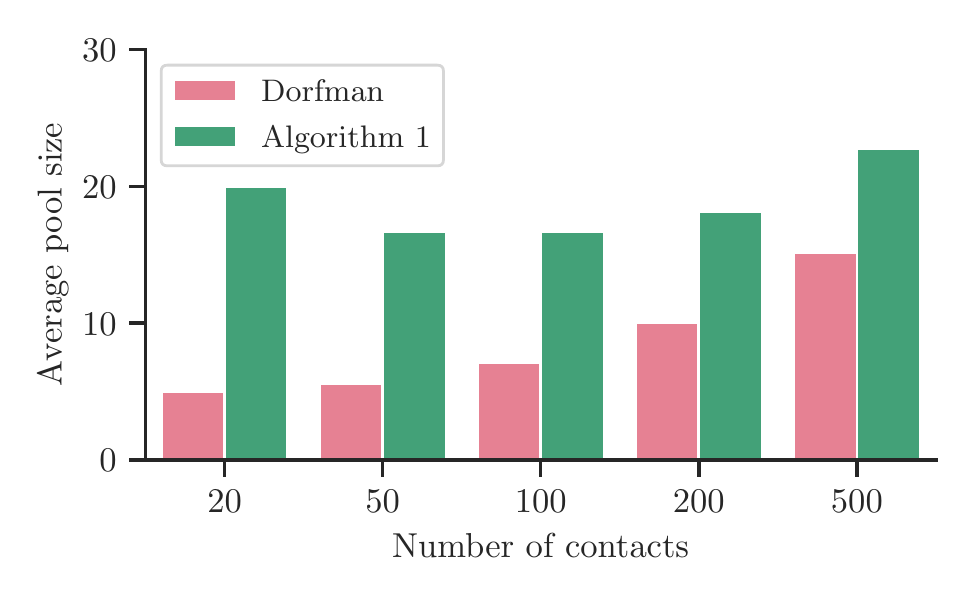}}
\subfloat[Percentage of tests saved]{\includegraphics[width=0.32\textwidth]{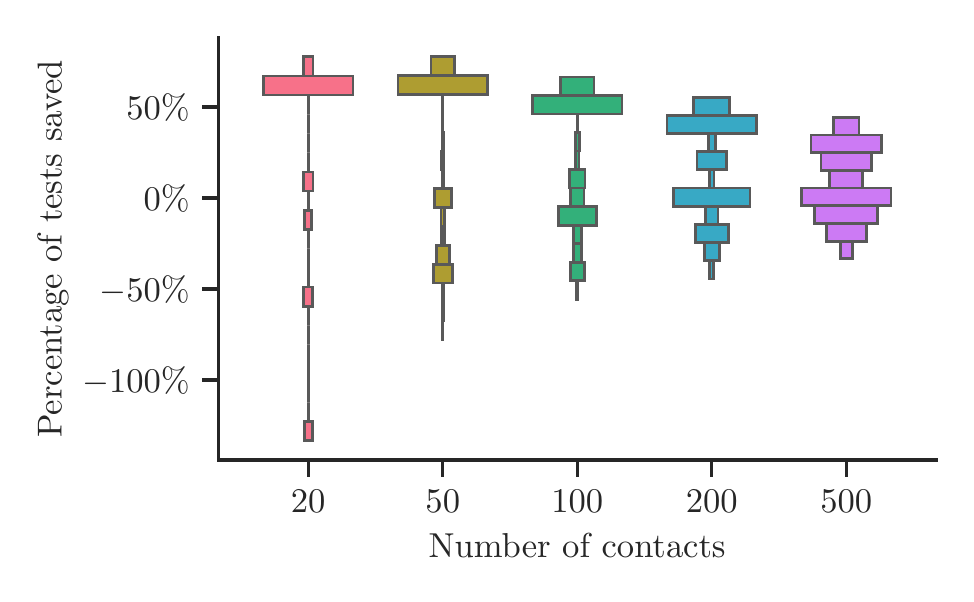}}
\caption{Performance of our method (Algorithm ~\ref{alg:dp}) and Dorfman'{}s method for various 
values of the number of contacts of a diagnosed individual during their infectious period.
Panel (a) shows the number of tests per contact, where the dots represent the average value and 
the error bars cover $90\%$ of the observations.
Panel (b) shows the average pool size.
Panel (c) shows the empirical distribution of the percentage of tests saved by using our method instead of Dorfman'{}s method,
where we exclude the highest and lowest $5\%$ of observations.
In all panels, we set the sensitivity and specificity to $s_e=s_p = 0.95$,
and sample the number of secondary infections from a truncated negative binomial distribution with mean 
$r = 2.5$ and dispersion parameter $k = 0.1$~\cite{endo2020estimating, kucharski2020effectiveness, lau2020characterizing, hasan2020superspreading}. 
For each combination of method and parameter values, the averages and quantiles in all panels are estimated using $100{,}000$ samples.
%
%
} \label{fig:performance-vs-N}
\end{figure}

%
We perform simulations to evaluate Algorithm~\ref{alg:dp} against Dorfman{}'s method in its ability to optimally trade off resources and false test outcomes in the presence of overdispersed distributions of secondary infections.
To generate the infection states for each contact, we first fix a number of contacts $N$ and sample the number of secondary infections $n \sim q_{r, k, N}(n)$, where $q_{r, k, N}(n)$ is a truncated negative binomial distribution as defined in Eq.~\ref{eq:q}.
Then, we select $n$ of the $N$ contacts at random and set their status to infected.
To implement Dorfman'{}s method, we use a variation of Algorithm~\ref{alg:dp} in which the expected numbers of tests, false negatives and false positives are computed assuming an independent individual probability of infection $p = \EE_{q_{r, k, N}}[n]/N$, using the formulas derived by Abrahamian et al.~\cite{aprahamian2019optimal}. 
%
%

%
We first compare the performance of our method and Dorfman'{}s method at finding the pools that minimize the number of tests (i.e., $\lambda_1 = \lambda_2 = 0$) for fixed values of the reproductive number $r$ and dispersion parameter $k$ matching estimations done during the early phase of the COVID-19 pandemic.
Figure~\ref{fig:performance-vs-N} summarizes the results with respect to the number of contacts $N$ of 
the diagnosed individuals. 
The results show that our method achieves a lower average number of tests across all values of $N$, with its competitive advantage being greater when the number of contacts is small and less apparent as the number of contacts is increasing.
Moreover, Dorfman'{} method chooses pool sizes that increase with the number of contacts while the ones chosen by our method remain relatively constant. 
This leads to significant differences between the distributions of the number of tests performed under the two methods. 
For example, as shown in Figure~\ref{fig:performance-vs-N}(c), when the number of contacts is $N=20$, our method is most likely to perform about $50\%$ less tests than Dorfman'{}s.
However, due to the more conservative pool sizes given by Dorfman'{}s method, there is a small probability that our method ends up performing more tests, sometimes even double the amount.
%
%
%

\begin{figure}[t]
\centering
\subfloat[$N = 20$]{\includegraphics[width=0.32\textwidth]{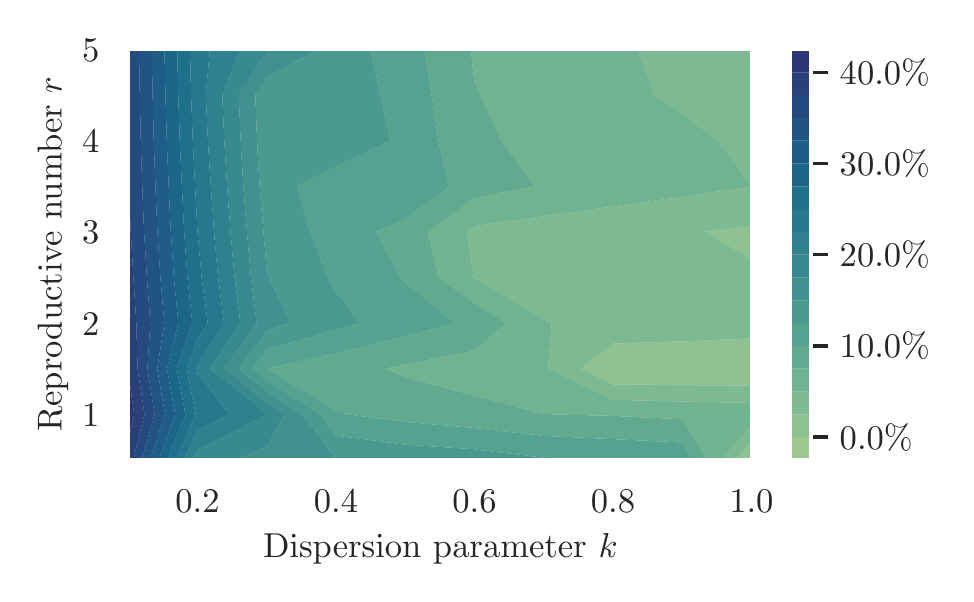}}
\subfloat[$N = 100$]{\includegraphics[width=0.32\textwidth]{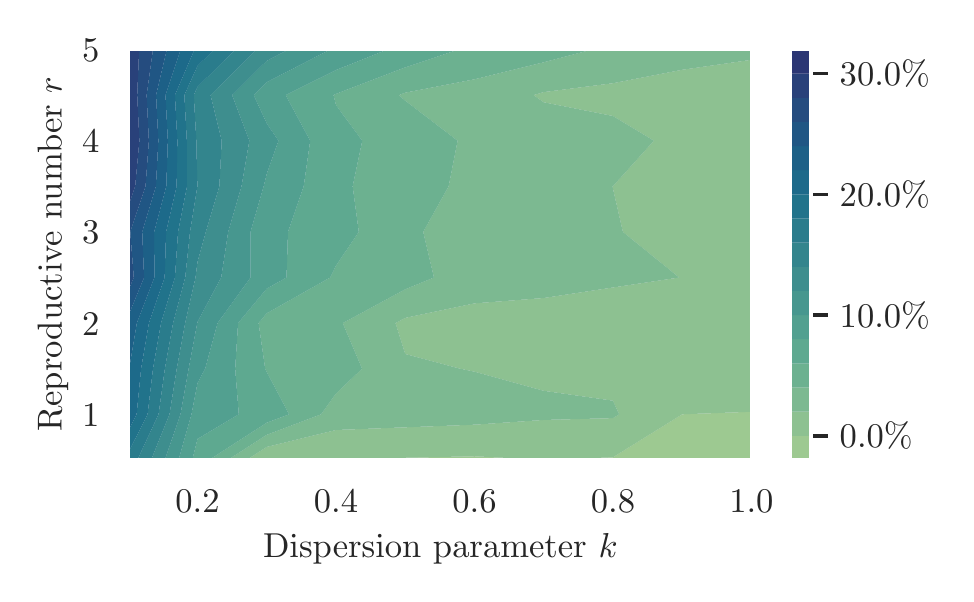}}
\subfloat[$N = 200$]{\includegraphics[width=0.32\textwidth]{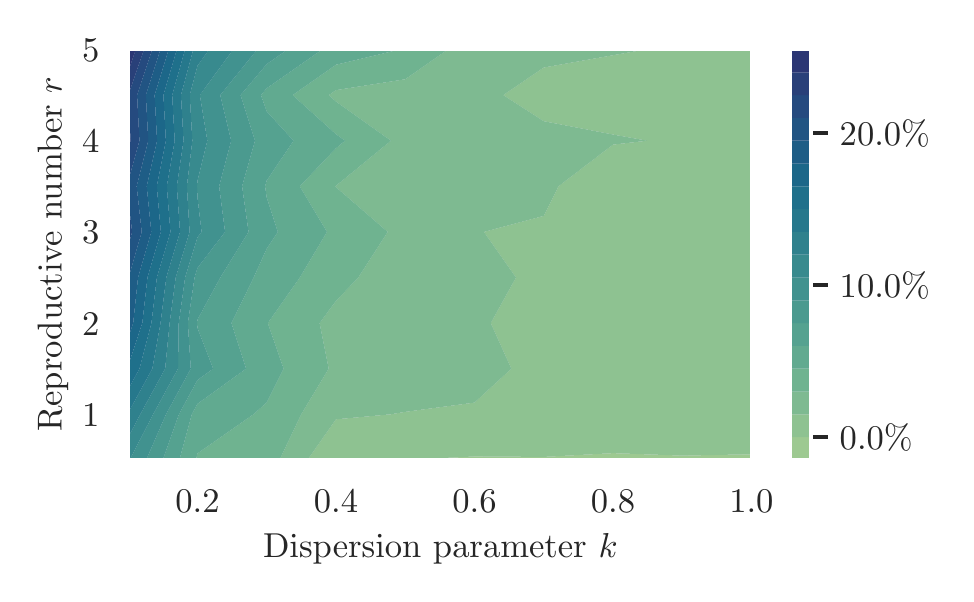}}
\caption{Performance of our method (Algorithm~\ref{alg:dp}) and Dorfman'{}s method for different values 
of the reproductive number $r$ and dispersion parameter $k$.
Each panel shows the average percentage of tests saved by using our method instead of Dorfman'{}s method.
Here, we set the sensitivity and specificity to $s_e =s_p=0.95$ and, in each experiment, we estimate the average using $100{,}000$ samples.} \label{fig:performance-vs-R-k}
\end{figure}

\begin{figure}[h!]
\centering
\subfloat[$s_e=s_p=0.75$]{\includegraphics[width=0.32\linewidth]{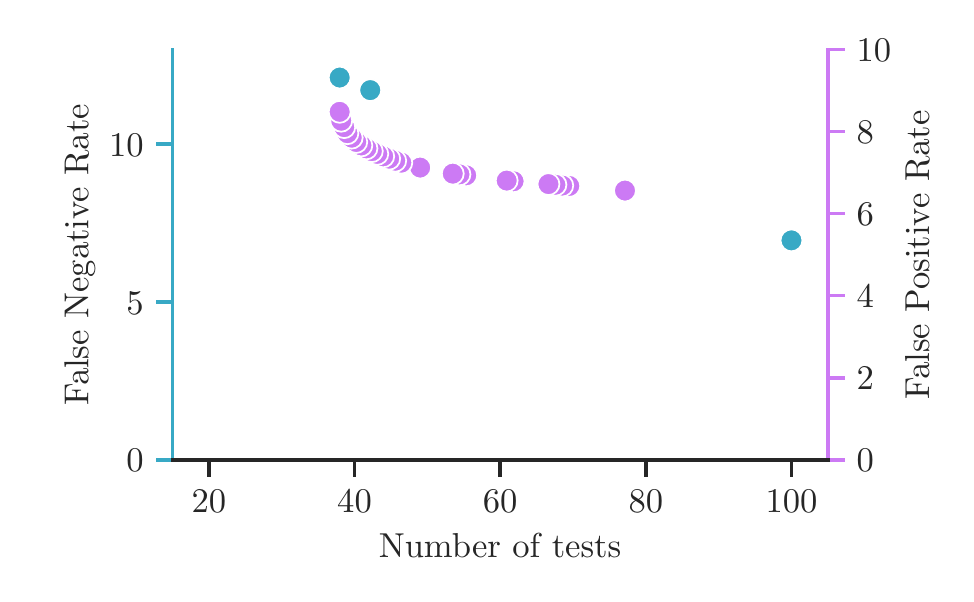}}
\subfloat[$s_e=s_p=0.85$]{\includegraphics[width=0.32\linewidth]{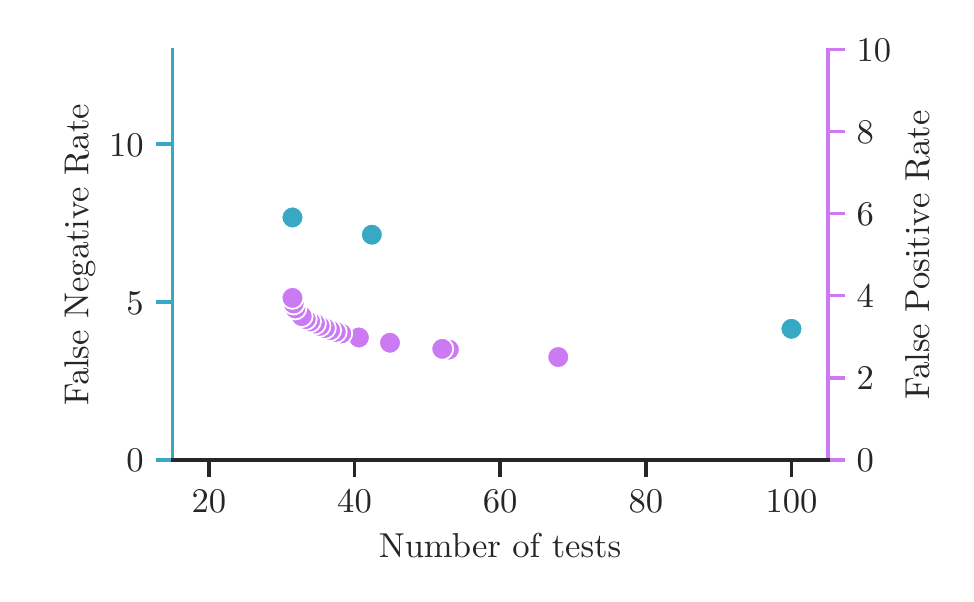}}
\subfloat[$s_e=s_p=0.95$]{\includegraphics[width=0.32\linewidth]{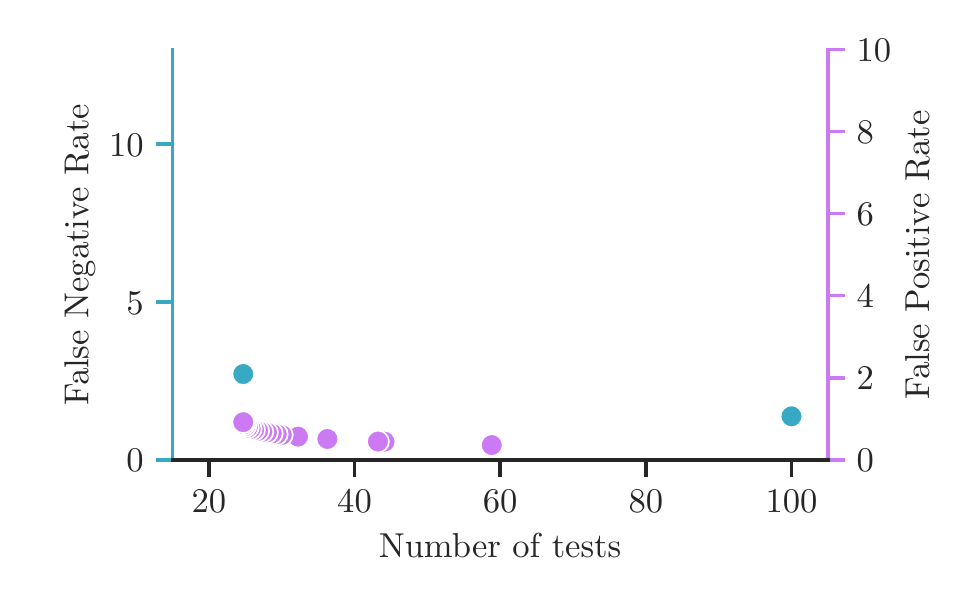}}
\caption{Average number of tests, false negative rate and false positive rate achieved by our method (Algorithm~\ref{alg:dp}) under different values of the parameters $\lambda_1$ and $\lambda_2$ and 
different levels of specificity $s_e$ and sensitivity $s_p$.
In each panel, we either penalize the false negative rate (i.e., we vary $\lambda_1$ and set $\lambda_2 = 0$) or the false positive rate (i.e., we vary $\lambda_2$ and set $\lambda_1 = 0$).
Accordingly, for the former, we show the false negative rate vs average number of tests (in blue) and, for
the latter, we show the false positive rate vs average number of tests (in pink).
%
%
Here, we set the number of contacts to $N=100$ and sample the number of positive infections from a truncated negative binomial distribution with mean $r = 2.5$ and dispersion parameter $k = 0.1$.
In each experiment, we estimate averages using $100{,}000$ samples.} \label{fig:performance-vs-lambdas}
\end{figure}

%
%
%
%
%

Next, we investigate to what extent our method offers a competitive advantage with respect to Dorfman'{}s 
method for additional values of the reproductive number $r$ and dispersion parameter $k$
different than those estimated during the early phase of the COVID-19 pandemic.
Figure~\ref{fig:performance-vs-R-k} summarizes the results, which show that our method offers the 
greatest competitive advantage whenever the number of secondary infections is overdispersed, i.e., 
$k \rightarrow 0$. 
%
%
Moreover, as the number of contacts $N$ increases, the competitive advantage is greater for larger
values of the reproductive number $r$. 
%

%
%
%
%
%


Finally, we explore the trade-off between the average number of tests that our method achieves and the false positive and negative rates, under different values of the parameters $\lambda_1$ and $\lambda_2$.
Figure~\ref{fig:performance-vs-lambdas} summarizes the results, which show that, to achieve lower false negative and false positive rates, more tests need to be performed.
When trading off the number of tests with the number of false positives ($\lambda_1=0$, $\lambda_2 > 0$), our method gradually changes the average pool size, leading to many possible trade-off points between 
the number of tests and the false positive rate.
When $\lambda_2$ takes small values, the optimal solution leads to pool sizes that minimize the number of tests, while the solution consists of pools of two contacts when $\lambda_2$ gets significantly larger.
When balancing the number of tests with the number of false negatives ($\lambda_1>0$, $\lambda_2 = 0$), there are only two or three possible solutions, with the extreme ones corresponding to pool sizes minimizing the number of tests and pools of size one, \ie, individual testing for all.
We noticed that the expected number of false negatives in a pool grows linearly with its size for sizes greater than one, therefore, making the exact size of the pool irrelevant in terms of the expected total number of false negatives.
As a consequence, this leads to only a few optimal solutions where some of the contacts are individually tested while the 
rest of them are split into pools.
\looseness=-1

Our results have direct implications for the allocation of limited and imperfect testing resources in future 
pandemics whenever there exists evidence of substantial overdispersion in the number of secondary infections. 
In this context, we acknowledge that more research is needed to more accurately characterize the level of 
overdispersion in a pandemic.
Moreover, it would be interesting to extend our algorithm using distributions other that the negative 
binomial and practically evaluate our method in randomized control studies.
%

%
%
%
%
%
%


\section*{Acknowledgements}
We would like to thank Vipul Bajaj for fruitful discussions.

{
\bibliographystyle{unsrt}
\bibliography{refs}
}




\end{document}